\newtheorem*{theorem}{Theorem}
\newtheorem{prop}{Proposition}
\begin{document}

\preprint{preprint1}

\title{Analytical approach to network inference: Investigating degree distribution}
\author{Gloria Cecchini}
\thanks{Corresponding author}
\email{gloria.cecchini@abdn.ac.uk}
\affiliation{Institute for Complex Systems and Mathematical Biology, University of Aberdeen, Meston Building, Meston Walk, Aberdeen, AB24 3UE, United Kingdom}
\affiliation{Institute of Physics and Astronomy, University of Potsdam, Campus Golm, Karl-Liebknecht-Stra{\ss}e 24/25, 14476, Potsdam-Golm, Germany}
\author{Bj\"orn Schelter}
\affiliation{Institute for Complex Systems and Mathematical Biology, University of Aberdeen, Meston Building, Meston Walk, Aberdeen, AB24 3UE, United Kingdom}

\date{\today}

\begin{abstract}
When the network is reconstructed, two types of errors can occur: false positive and false negative errors about the presence or absence of links.
In this paper, the influence of these two errors on the vertex degree distribution is analytically analysed.
Moreover, an analytic formula of the density of the biased vertex degree distribution is found.
In the inverse problem, we find a reliable procedure to reconstruct analytically the density of the vertex degree distribution of any network based on the inferred network and estimates for the false positive and false negative errors based on, e.g., simulation studies.
\end{abstract}

\maketitle

\section{Introduction}
 
Networks are one of the most frequently used modelling paradigms for dynamical systems. 
Investigations towards synchronization phenomena in networks of coupled oscillators have attracted considerable attention, and so has the analysis of chaotic behaviour and corresponding phenomena in networks of dynamical systems to name just a few \cite{Arkady,Arkady2016,Rok2017,Li2014}. 
Understanding and characterizing network behaviour has triggered interest in a vast number of disciplines, ranging from optimizing vaccination strategies \cite{Clusella2016} to understanding the functioning or malfunctioning of the human brain \cite{Bullmore2009,Pessoa2014, Petersen2015}.

While first principle modelling is feasible in some areas, in others, networks need to be inferred, e.g. from observed data, see, e.g., \cite{Guimera2009,Arkady2011,Karrer2011,Alderisio2017}. 
This comes with certain challenges ranging from selecting the appropriate nodes or even defining them, to the choice of an appropriate technique to infer the interaction between the nodes.
These choices have a strong impact on the resulting network. 
Here, we discuss another related challenge that originates from the fact that network inference in the \textit{Inverse Problem} typically relies on statistical methods and selection criteria.
 
A typical network inference procedure, estimates the connectivity between a-priori specified nodes in a network. 
If the connectivity measure passes a certain threshold, a link between the corresponding nodes is assumed to be present. 
The choice of this threshold is arbitrary, but it is intuitively clear that there is a strong correlation between number of links and choice of threshold. 
Selecting the threshold, not only controls how many links are inferred correctly but also establishes the number of incorrectly determined links. 
There are two types of errors,
(i) a link may be erroneously considered present, this false positive conclusion is referred to as a \textit{type I error};
(ii) a present link may remain undetected, this false negative conclusion is referred to as a \textit{type II error}.

In this manuscript, we present an analytical framework that on the network level links the reconstructed network structure contaminated by \textit{type I} and \textit{type II errors} to the true underlying one. 
While the framework is rather general, we used the vertex degree distribution to derive the functional relationship between the reconstructed and true underlying network. 
This enables us to obtain superior estimates for the vertex degree distribution, a key property of a network \cite{NewmanBook}.
It has been shown that including the vertex degrees into stochastic blockmodels improves their performance for statistical inference of group structure \cite{Karrer2011}.
The functional relationship depends on the choice of \textit{type I error}, \textit{type II error} and the dimension of the network.
We demonstrate the performance of our novel approach in a simulation study.
 
The manuscript is structured as follows. 
In Section \ref{Materials and Methods} a theoretical analysis of our method is presented.
Section \ref{results} shows some cases where the method presented in Section \ref{Materials and Methods} is applied.

\section{Materials and Methods}
\label{Materials and Methods}

In section \ref{networkChange}, a short introduction to networks is presented; we analyse the influence of \textit{type I} and \textit{type II errors} on the network structure, i.e. false positive and false negative conclusions about links.
In section \ref{inferenceVertDegree} different methods to solve the \textit{Inverse Problem} are presented.
Section \ref{dirNetworks} contains a brief description of the generalization to directed networks.

\subsection{Networks change}
\label{networkChange}

A network is defined as a set of vertices (or nodes) with links (or edges) between them. 
To quantify the structure of networks, different characteristics have been introduced \cite{Olbrich2010}. 
Here, we consider two key network characteristics: vertex degree distribution and number of edges.
The vertex degree describes the number of links of a node; if the vertex $v$ has $k$ edges attached, its vertex degree is $d=k$.
The vertex degree distribution is an important property of the entire network.

Networks can be directed or undirected \cite{NewmanBook}.
In an undirected network, connection of $v_1$ to $v_2$ implies the connection of $v_2$ to $v_1$.
In a directed network, this symmetry is broken, therefore if a path from $v_1$ to $v_2$ exists, a path from $v_2$ to $v_1$ does not necessarily exist.
In this section, we consider undirected networks.
Later [Sec.~\ref{dirNetworks}] a generalization to directed networks is presented.

Consider a network $G$ with $n$ nodes and vertex degree distribution defined by the probability function  $\mathcal{P}$, i.e., $\mathcal{P}_i=\mathbb{P}(d=i)$ is the probability that the degree $d$ is $i$, for $i=0,\cdots,n-1$.
Note that the degree of a vertex is between $0$ and $n-1$, since each vertex can be connected to at most $n-1$ remaining vertices.

We analyse the influence of \textit{type I} and \textit{type II errors} on the vertex degree distribution of a given network $G$.
We call $G'$ the network detected when \textit{type I} and \textit{type II errors} occur and we assume that $\alpha$ is the probability of a \textit{type I error} and that $\beta$ is the probability of a \textit{type II error}.
Therefore, $\alpha$ expresses the probability that a link absent in $G$ is present in $G'$ and $\beta$ is the probability that a link present in $G$ is no longer present in $G'$.
Hence, the set of edges of $G'$ is a combination of \textit{true positive links} and \textit{false positive links} of $G$.
The vertex degree distribution of $G'$ is characterised by the probability function $\mathcal{P'}$.

Consider a vertex and assume it has degree $k$, therefore there are $k$ links connected to it and $n-1-k$ absent links.
We evaluate the probability that this vertex has vertex degree $k'$ in $G'$.
The vertex degree
\begin{equation}
\label{sum}
k'=j+i
\end{equation}
is given by the sum of \textit{true positive links} $j$ and \textit{false positive links} $i$; additionally, $i$ and $j$ have to satisfy
\begin{subequations}\label{cond}
\begin{align} 
& j\leq k\qquad \text{and}\label{cond1}\\
& i\leq n-1-k.\label{cond2}
\end{align}
\end{subequations}
The condition described by Eq.~(\ref{cond1}) guarantees that the number of false negative links is larger or equal than zero, and smaller or equal than the number of the original true positive links, i.e., $0\leq k-j\leq k$.
Likewise, the number of false positive links must be non-negative and smaller or equal than the number of the original non-present links, Eq.~(\ref{cond2}).

The probability that a vertex has degree $k'$ in $G'$, knowing it has degree $k$ in $G$ is

\vspace{0.5cm}

\begin{widetext}
\begin{equation}
\mathbb{P}(d'=k'|d=k)=
\begin{cases}
{\displaystyle \sum_{i=0}^{k'}\binom{k}{k'-i}(1-\beta)^{k'-i}\beta^{k-k'+i}\binom{n-1-k}{i} \alpha^{i}(1-\alpha)^{n-1-k-i}}\qquad\qquad\qquad\qquad &\text{ }\\
&\phantom{\text{ }}\llap{\text{if  $k'\leq k$ and $k'\leq n-1-k$}}\\
{\displaystyle \sum_{i=0}^{k}\binom{k}{i}(1-\beta)^{i}\beta^{k-i}\binom{n-1-k}{k'-i} \alpha^{k'-i}(1-\alpha)^{n-1-k-k'+i}}\\
&\phantom{\text{ }}\llap{\text{if  $k<k'\leq n-1-k$}}\\
{\displaystyle \sum_{i=0}^{n-1-k'}\binom{k}{k-i}(1-\beta)^{k-i}\beta^{i}\binom{n-1-k}{k'-k+i} \alpha^{k'-k+i}(1-\alpha)^{n-1-k'-i}}\\
&\phantom{\text{ }}\llap{\text{if  $k'\geq k$ and $k'>n-1-k$}}\\
{\displaystyle \sum_{i=0}^{n-1-k}\binom{k}{k'-i}(1-\beta)^{k'-i}\beta^{k-k'+i}\binom{n-1-k}{i} \alpha^{i}(1-\alpha)^{n-1-k-i}}\\
&\phantom{\text{ }}\llap{\text{if  $n-1-k<k'<k$.}}
\end{cases}
\label{bigEq}
\end{equation}
\end{widetext}

The probability $\mathbb{P}(d'=k'|d=k)$ is a piecewise function for all combinations of $i$ and $j$ satisfying Eqs.~(\ref{sum}) and (\ref{cond}).
To obtain Eq.~(\ref{bigEq}) we consider, as an example, the first case, i.e., $k'\leq k$ and $k'\leq n-1-k$.

The probability of having $j$ \textit{true positive links}, over all possible $k$ original true positive links, is 
\begin{equation}
\label{falseNeg}
\mathbb{P}(\text{\# true positive links}=j)=\binom{k}{j}(1-\beta)^j \beta^{k-j},
\end{equation}
which is the binomial distribution $\mathcal{B}(k,1-\beta)$.
Since $j=k'-i$ [Eq.~(\ref{sum})], Eq.~\ref{falseNeg} corresponds to the first part of the first case of Eq.~\ref{bigEq}.
Similarly, the probability of having $i$ \textit{false positive links} is 
\begin{equation}
\label{falsePos}
\mathbb{P}(\text{\# false pos links}=i)=\binom{n-1-k}{i}\alpha^i (1-\alpha)^{n-1-k-i},
\end{equation}
which is the binomial distribution $\mathcal{B}(n-1-k,\alpha)$. 

Combining Eqs.~(\ref{falseNeg}) and (\ref{falsePos}), changing variable $j$ according to Eq.~(\ref{sum}), and considering all possible combinations of $i$ and $j$, we obtain the first case of Eq.~(\ref{bigEq}).
All the other cases can be derived in the same way following the conditions in Eq.~(\ref{cond}).

Applying the law of total probability

\begin{multline}
\label{totProb}
\mathbb{P}(d'=k')=\sum_{k=0}^{n-1}\mathbb{P}(d'=k'|d=k) \mathbb{P}(d=k) \\ \forall k'\in \{0,\cdots,n-1\}
\end{multline}
we obtain the matrix equation

\begin{multline*}
\left[
\begin{smallmatrix}
\mathbb{P}(d'=0)\\
\vdots\\
\mathbb{P}(d'=n-1)
\end{smallmatrix}
\right]
= \\
\underbrace{\left[
\begin{smallmatrix}
\mathbb{P}(d'=0|d=0) & \dotsb & \mathbb{P}(d'=0|d=n-1)\\
\vdots&\ddots&\vdots\\
\mathbb{P}(d'=n-1|d=0)&\cdots&\mathbb{P}(d'=n-1|d=n-1)\\
\end{smallmatrix}
\right]}_{=A}
\cdot
\underbrace{\left[
\begin{smallmatrix}
\mathbb{P}(d=0)\\
\vdots\\
\mathbb{P}(d=n-1)
\end{smallmatrix}
\right]}_{=\mathcal{P}},
\end{multline*}

\noindent
i.e.,

\begin{equation}
\label{p1Ap}
\mathcal{P'}=A\mathcal{P}.
\end{equation}

The matrix $A=A(n,\alpha,\beta)$ depends on $n$, $\alpha$ and $\beta$ and has determinant 
\begin{equation}
\label{detA}
\det A=(1-\alpha-\beta)^{\frac{n(n-1)}{2}}, 
\end{equation}
therefore it is invertible if and only if $\alpha\neq 1-\beta$, see Appendix \ref{apProof} for a proof.

Assuming $G$ is known, Eq.~(\ref{p1Ap}) characterises the influenced of \textit{type I} and \textit{type II errors} on the vertex degree distribution, and it allows to find the vertex degree distribution of the network $G'$.

\subsection{Inference of networks' vertex degree distribution}
\label{inferenceVertDegree}

Section \ref{networkChange} analyses the impact of \textit{type I} and \textit{type II errors} on the vertex degree distribution of a given network.
Equation (\ref{p1Ap}) allows to obtain $\mathcal{P'}$ from $\mathcal{P}$.
In this section, we are interested in the inverse problem, i.e., inverting Eq.~(\ref{p1Ap}), to infer the original vertex degree distribution from an observed one.
When $\{\alpha,\beta\}\neq \{0,0\},\{1,1\}$, since the convergence to zero of the determinant of $A$ scales like $x^{\frac{n(n-1)}{2}}$ for $|x|<1$ [Eq.~(\ref{detA})], numerical issues arise for relatively small $n$ when inverting the matrix $A$ to find $\mathcal{P}$ through $\mathcal{P}=A^{-1}\mathcal{P'}$.
The cases $\alpha,\beta=0$ and $\alpha,\beta=1$ are trivial, see Appendix \ref{apProof}.

The \textit{least squares method} is a standard approach to solve problems like Eq.~(\ref{p1Ap}).
Although the matrix $A$ is not singular, for reasonable parameter values for $n$, $A$ is typically ill-conditioned, therefore the pseudoinverse of the truncated singular value decomposition of $A$ is used.

The \textit{singular value decomposition} of a matrix $A$ is the factorization of the matrix into the product of $A=UWV^T$ where $W$ is a diagonal matrix and the columns of the matrices $U$ and $V$ are orthonormal \citep{Yanai2011}.
The elements $w_1,\cdots,w_n$ on the diagonal of $W$ are called \textit{singular values} of $A$ and they are ordered such that $w_1\geq w_2\geq\cdots\geq w_r>w_{r+1}=\cdots =w_n=0$, where $r$ is the rank of $A$.

The singular value decomposition is a tool to compute the pseudoinverse of a matrix.
If $A$ has singular value decomposition $A=UWV^T$, its pseudoinverse $A^+$ is defined as $A^+=VW^+U^T$, where $W^+$ is obtained from $W$ replacing all the non-zero elements with their reciprocals.

The \textit{truncated singular value decomposition} is a method for regularization of ill-posed least squares problems \citep{Hansen1987}.
Once the singular value decomposition $A=UWV^T$ is found, the matrix $W$ is truncated at, e.g., rank $t$ such that only the first $t$ \textit{singular values} are considered; this matrix is usually called $W_t$.
More precisely, $W_t$ is a diagonal matrix with elements $w_1\geq w_2\geq\cdots\geq w_t>w_{t+1}=\cdots =w_n=0$, with $t<r$.
The truncated diagonal matrix $W_t$ is used to find an approximation of the matrix $A$ using its decomposition, i.e., $A_t=UW_tV^T$.
The optimal value for $t$ has been studied in \citep{Frank,Gavish2014}.
The matrix $A_t$ is the closest approximation of $A$ of rank $t$, \citep{Hansen1987}.
Using $W_t$, we calculate the pseudoinverse of $A_t$, i.e., $A_t^+=VW_t^+U^T$, and we solve Eq.~(\ref{p1Ap}) resulting in

\begin{equation}
\label{pA+p1}
\mathcal{P}=A_t^+\ \mathcal{P'}.
\end{equation}

\subsection{Generalization for directed networks}
\label{dirNetworks}

For directed networks the vertex degree is characterised by the \textit{vertex in-degree} and the \textit{vertex out-degree}, \cite{NewmanBook}.
Usually, in a directed network the \textit{vertex degree} is the sum of the vertex in-degree and the vertex out-degree.

Both the in-degree and the out-degree of a vertex are numbers between $0$ and $n-1$, if $n$ is the number of vertices of the network. 
Therefore, the analysis shown in Section \ref{networkChange} and \ref{inferenceVertDegree} remains valid if either the vertex in-degree or the vertex out-degree are considered instead of the vertex degree.

An undirected network with $n$ nodes has at most $n(n-1)/2$ edges.
A network with $n$ nodes has at most $n(n-1)$ directed edges;
a generalization for other characteristics is likely more complicated, and therefore requires a more in-depth analysis.

\section{Results and Discussions}
\label{results}

To demonstrate the abilities as well as limitations, the analysis presented in Section \ref{Materials and Methods} is applied to some typical simulated networks.
We like to highlight that our approach is derived analytically; simulation studies are predominantly needed to demonstrate its applicability in real-world examples and to check for numerical issues, etc.
There might be practical issues, e.g., due to the dimension of the network, and with the aim to show how these challenges can be overcome we present a simulation study to explore the concrete applicability of our method. 

We study 5 network topologies that present different characteristics so to have a spectrum of networks as wide as possible to which we apply our analysis.
Namely, we consider Erd{\H o}s-R{\'e}nyi (also called random), Small-World, Scale-Free networks, a three-dimensional grid, and a network of randomly connected communities, \cite{NewmanBook}.
We vary the probabilities $\alpha$ and $\beta$ of \textit{type I} and \textit{type II errors} in the range $1\%-10\%$ mimicking a typical analysis method that has high sensitivity and high specificity.
Nevertheless, both lower and higher values for $\alpha$ and $\beta$ can be chosen and the results obtained are qualitatively the same as the ones presented below.

Consider a random network $G$ with $100$ nodes and a probability of a connection of $0.2$.
The vertex degree has binomial distribution $\mathcal{B}(100,0.2)$.
Adding and removing links with probabilities $\alpha=0.05$ and $\beta=0.03$ results in a new network $G'$.
The vertex degree distribution of $G'$ is calculated empirically by counting the vertices' degrees.
Applying the procedure explained above, the vertex degree distribution of the original network is estimated.
Figure \ref{ErdosAll} shows the results using the cut-off for the truncated singular value decomposition method of $0.5$, i.e., $W_t$ contains only singular values greater than $0.5$.
The choice of $t$ is motivated by smoothness and regularity of the solution obtained.

Figure~\ref{ErdosAll} shows the histogram of the degrees of the vertices of the original network $G$, the density of the detected network $G'$, the reconstructed vertex degree distribution of the original network $\mathcal{P}$ resulting from Eq.~(\ref{pA+p1}), and the result when a non-negative constraint is applied to the truncated singular value decomposition to avoid that numerical issues result in negative solutions.
More precisely, we use $lsqnonlin$ Matlab function with lower bound condition $lb=zeros(n)$; this function implements the \textit{trust region reflective} algorithm \cite{Coleman1994,Branch1999}.
The density of $G'$ is estimated by 
$$\mathcal{P'}_i=\frac{\text{number of nodes with vertex degree}=i}{\text{number of nodes}}$$
its empirical distribution, and this is used to infer the original network.

\begin{figure}[t!]
\includegraphics[width=0.5\textwidth,trim=0.9cm 0cm 1cm 0cm,clip=true]{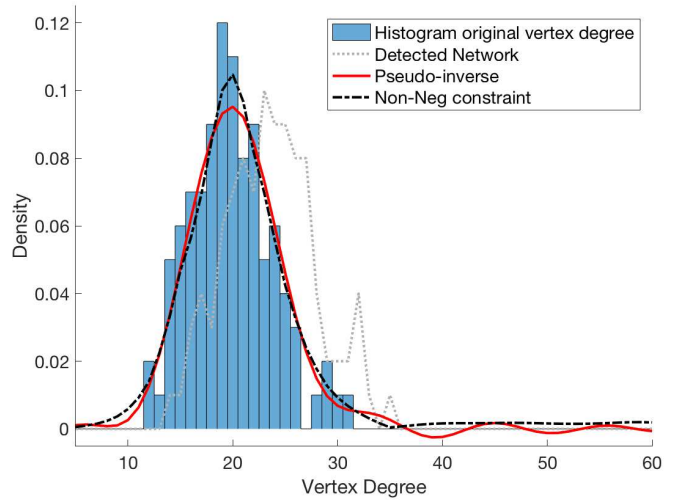}
\caption{Density histogram of the original vertex degrees, blue bars, detected density vertex degree distribution, gray dotted line, the result of network reconstruction using Eq.~(\ref{pA+p1}) knowing $A$ and $\mathcal{P'}$, solid red line, and the result when a non-negative constraint is applied to the truncated singular value decomposition, black dashed line. The original network is a random network with $100$ nodes and probability of connection $0.2$.}
\label{ErdosAll}
\end{figure}

Figure \ref{SmallAll} shows the result when the original network $G$ is a Small-World network.
It is built from the regular network of $100$ nodes, vertex degree $4$, and probability of rewiring $0.4$.
The network $G'$ is obtained by adding and removing links at random with probabilities $\alpha=0.03$ and $\beta=0.05$ respectively.
The cut-off for the truncated singular value decomposition method is $0.33$.

Figure \ref{SmallAll} shows the histogram of the degrees of the vertices of the original network $G$, the density of the detected network $G'$, the reconstructed vertex degree distribution, and the result when a non-negative constraint is applied to the truncated singular value decomposition.

\begin{figure}[t!]
\centering
\includegraphics[width=0.5\textwidth,trim=0.7cm 0cm 1cm 0cm,clip=true]{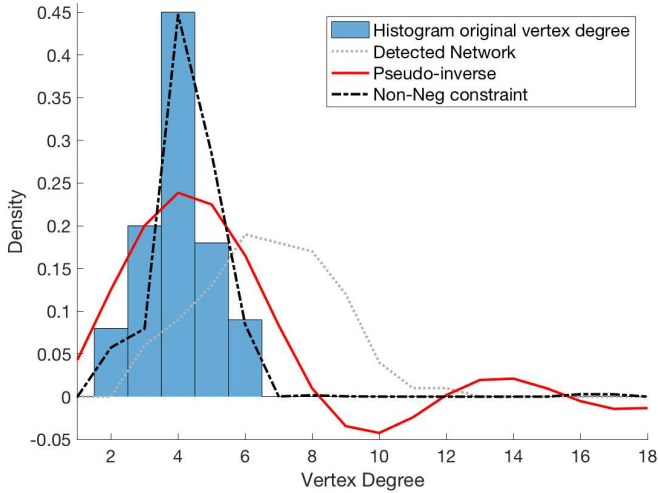}
\caption{Density histogram of the original vertex degrees, blue bars, detected density vertex degree distribution, gray dotted line, the result of network reconstruction using Eq.~(\ref{pA+p1}) knowing $A$ and $\mathcal{P'}$, solid red line, and the result when a non-negative constraint is applied to the truncated singular value decomposition, black dashed line. The original network is a Small World network with $100$ nodes and probability of rewiring $0.4$.}
\label{SmallAll}
\end{figure}

Figure \ref{SFAll} shows the result when the original network $G$ is a Scale-Free network.
It is built using a preferential attachment model for network growth.
At each step a vertex, with a link attached to it, is added.
The probability that the new vertex attaches to a given old one is proportional to its vertex degree.
This procedure is repeated until the network has $100$ nodes.
The network $G'$ is obtained by adding and removing links at random with probabilities $\alpha=0.1$ and $\beta=0.03$ respectively.
The cut-off for the truncated singular value decomposition method is $0.4$.

Figure \ref{SFAll} shows the histogram of the degrees of the vertices of the original network $G$, the density of the detected network $G'$, the solution of Eq.~(\ref{pA+p1}), and the result when a non-negative constraint is applied to the truncated singular value decomposition.

\begin{figure}[t!]
\centering
\includegraphics[width=0.5\textwidth,trim=1cm 0cm 1cm 0cm,clip=true]{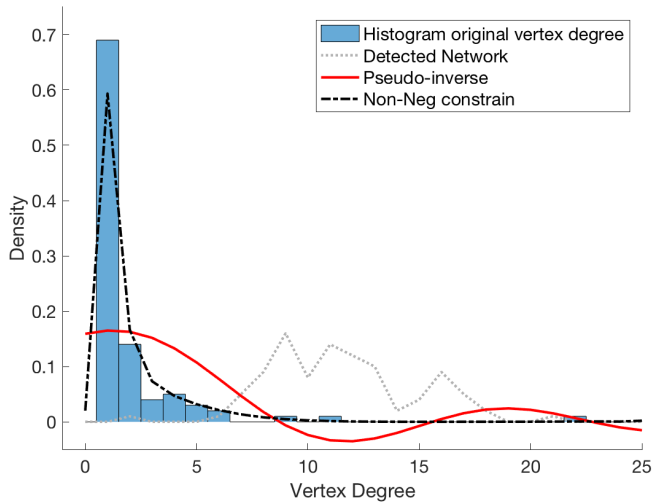}
\caption{Density histogram of the original vertex degrees, blue bars, detected density vertex degree distribution, gray dotted line, the result of network reconstruction using Eq.(\ref{pA+p1}) knowing $A$ and $\mathcal{P'}$, solid red line, and the result when a non-negative constraint is applied to the truncated singular value decomposition, black dashed line. The original network is a Scale-Free network with $100$ nodes.}
\label{SFAll}
\end{figure}

Another example we apply our method to, is when the original network $G$ is a three-dimensional grid $4\times 5\times 5$; note that $G$ has $100$ nodes.
The network $G'$ is obtained by adding and removing links at random with probabilities $\alpha=0.1$ and $\beta=0.05$ respectively.
The cut-off for the truncated singular value decomposition method is $0.38$.
Figure \ref{AssortAll} shows the histogram of the degrees of the vertices of the original network $G$, the density of the detected network $G'$, the solution of Eq.~(\ref{pA+p1}), and the result when a non-negative constraint is applied to the truncated singular value decomposition.

\begin{figure}[t!]
\centering
\includegraphics[width=0.5\textwidth,trim=1cm 0cm 1cm 0cm,clip=true]{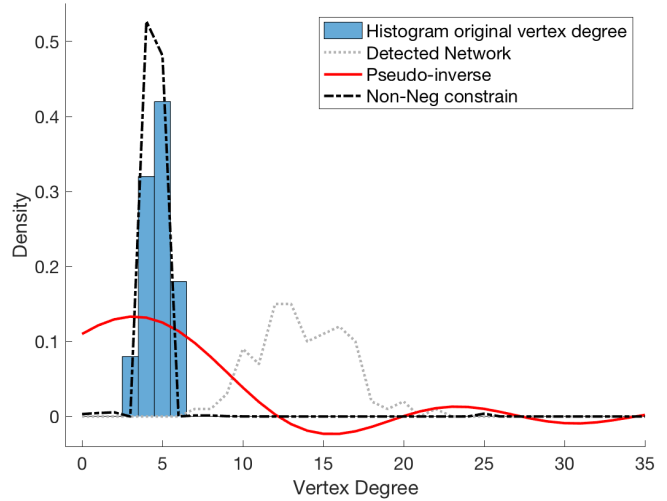}
\caption{Density histogram of the original vertex degrees, blue bars, detected density vertex degree distribution, gray dotted line, the result of network reconstruction using Eq.(\ref{pA+p1}) knowing $A$ and $\mathcal{P'}$, solid red line, and the result when a non-negative constraint is applied to the truncated singular value decomposition, black dashed line. The original network is a $4\times 5\times 5$ grid.}
\label{AssortAll}
\end{figure}

Finally, we investigate the case when $G$ is a network of three randomly connected communities.
It is built by constructing three Erd{\H o}s-R{\'e}nyi networks, with probability of connection $0.3$, $0.6$, and $0.9$, and each with $33$ nodes.
Then, nodes from different communities are connected with probability $0.1$.
The network $G'$ is obtained by adding and removing links at random with probabilities $\alpha=0.05$ and $\beta=0.03$ respectively.
The cut-off for the truncated singular value decomposition method is $0.42$.
Figure \ref{CommAll} shows the histogram of the degrees of the vertices of the original network $G$, the density of the detected network $G'$, the solution of Eq.~(\ref{pA+p1}), and the result when a non-negative constraint is applied to the truncated singular value decomposition.

\begin{figure}[t!]
\centering
\includegraphics[width=0.5\textwidth,trim=0.9cm 0cm 1cm 0cm,clip=true]{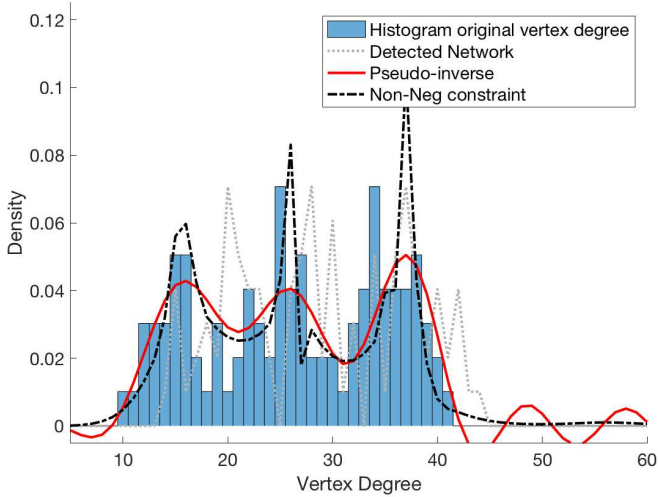}
\caption{Density histogram of the original vertex degrees, blue bars, detected density vertex degree distribution, gray dotted line, the result of network reconstruction using Eq.(\ref{pA+p1}) knowing $A$ and $\mathcal{P'}$, solid red line, and the result when a non-negative constraint is applied to the truncated singular value decomposition, black dashed line. A network of three randomly connected communities is used as original network.}
\label{CommAll}
\end{figure}

Another interesting aspect is the influence of \textit{type I} and \textit{type II errors} and the proposed method on the reconstruction of individual nodes and not just the correct distribution. 
This is particularly relevant for nodes that have a degree much higher than average, so-called hubs.

In the Scale-Free example, Fig.~\ref{SFAll}, the detected distribution appears to be smoother than the original, implying that a hub might have been converted to a non-hub. 
Analysing this in more detail, there is convincing evidence that this is not the case - hubs are correctly identified as hubs.

Consider a node $d$ that has degree $k$ in $G$ that has $n$ nodes.
Due to \textit{type I} and \textit{type II errors}, this node in $G'$ has degree $d'$, a random variable with distribution shown in Eq.~(\ref{bigEq}).
Taking realisations of this random variable, and inverting the process using Eq.~(\ref{pA+p1}), allows us to compare individual degrees for a given node of the true network with the reconstructed one.
We consider a network with $n=100$ nodes, a node $d$ with degree $k=75$, probabilities of \textit{type I} and \textit{type II errors} of $\alpha=0.05$ and $\beta=0.03$, respectively, and we simulate $100$ realisations of the random variable described above.
Figure~\ref{hub1} shows the reconstruction of the degree of $d$ using these realisations.
The result does not only show an improvement from the detected degrees $k’$, but also illustrates the high accuracy of the reconstruction method.

Figure~\ref{hubAllin1} shows the reconstruction of various degrees, i.e., $k$ from $10$ to $90$ in steps of $10$, using the same parameters $n=100$, $\alpha=0.05$, $\beta=0.03$, and $100$ realisations each.
This again demonstrates that our method reliable reconstructs the correct degree for this individual node. 
Further simulations, not presented here, varying $\alpha$ and $\beta$ between $0.01$ and $0.1$, show qualitatively the same results. 
In every case, the reconstruction is very robust, and this suggests that it is extremely unlikely that a hub is reconstructed as a non-hub. 
Moreover, the reconstruction works correctly not only on the general distribution, but also when it is applied to single nodes.

\begin{figure}[!t]
\includegraphics[width=0.5\textwidth,trim=1cm 0cm 1cm 0cm,clip=true]{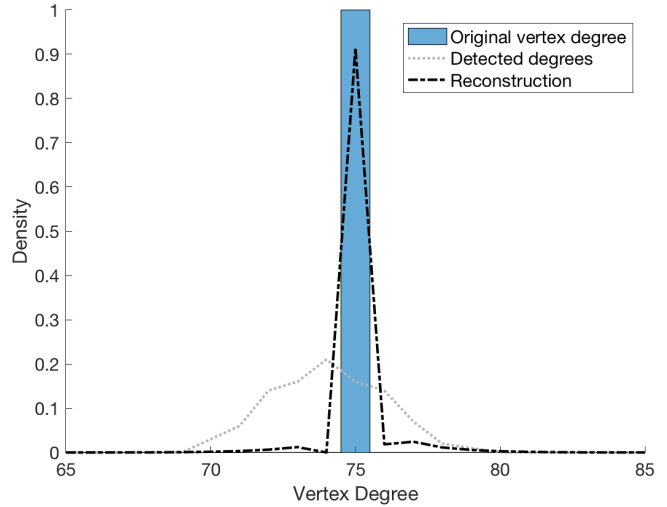}
\caption{Reconstruction of the degree of a single node with original degree $k=75$.}
\label{hub1}
\end{figure}

\begin{figure}[!t]
\includegraphics[width=0.5\textwidth,trim=1cm 0cm 1cm 0cm,clip=true]{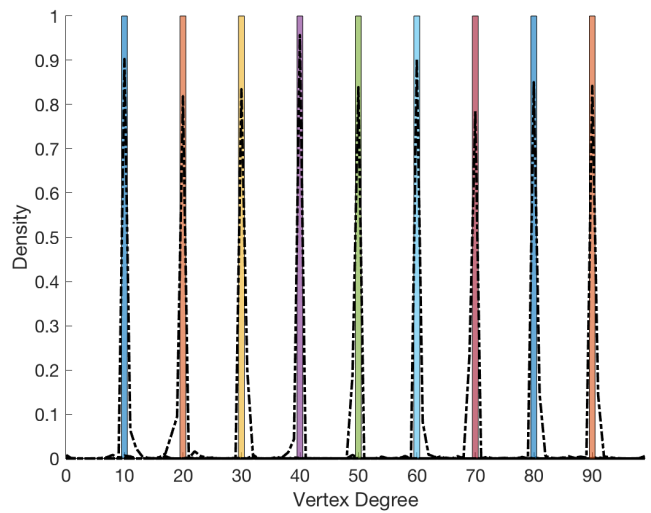}
\caption{Reconstructions of the degree of single nodes with original degrees $k$ from $10$ to $90$ in steps of $10$.}
\label{hubAllin1}
\end{figure}

\subsection{Robustness of reconstruction}

As stated above, our reconstruction method assumes the probabilities of \textit{type I} and \textit{type II errors} to be known a priori.
While the \textit{type I error} is controlled by statistical methods, the \textit{type II error} must be inferred or reasonable assumptions from simulations, or prior studies, about the \textit{type II error} must be available. 
To show the impact of violations of this and thereby the robustness of our method, we analyse the performance of the reconstruction when perturbations on $\alpha$ and $\beta$ are introduced.

Figures \ref{wrongB2}-\ref{wrongA2} demonstrate the robustness of our approach for various examples. 
Figures \ref{wrongB2}-\ref{wrongB} are used to show robustness with respect to $\beta$, while Figs. \ref{wrongA}-\ref{wrongA2} show the robustness with respect to $\alpha$. 
The perturbations are quantified in percentage using the parameter $\delta$, e.g. the perturbations of $\beta$ are expressed by $\beta+\delta\beta$. Note that, since $0\leq\beta\leq 1$, the conditions for the perturbations are $-1\leq\delta\leq 1/\beta-1$; namely, if we call $\beta^p=\beta+\delta\beta$ the perturbed $\beta$, then we have
\begin{equation}
\begin{split}
0\leq&\beta^p\leq 1\\
0\leq&\beta+\delta\beta\leq 1\\
-1\leq&\delta\leq 1/\beta-1.
\end{split}
\end{equation}
Negative values for $\delta$ represent underestimated values for $\beta$ and positive overestimated values for $\beta$.
The same argument is used for the perturbation of $\alpha$.

Figure~\ref{wrongB2} shows the reconstruction of a Scale-Free network with $100$ nodes for the true value of $\beta=0.03$, and also for various values of $\beta$ deviating up to $1000\%$ from the true value, i.e., $\beta^p=0.33$.
The cut-off for the truncated singular value decomposition method is $0.1$ and the probability of \textit{type I error} is $\alpha=0.05$, assuming to control the family-wise error rate at this value, i.e., the probability of making at least one \textit{type I error}; it is beyond the scope of this manuscript to discuss cases in which the technique selected to reconstruct the network violates this assumption - we will however estimate the results for different deviations from the true $\alpha$ used to generate the plots to investigate its robustness. 
Figure~\ref{wrongB2} shows that our approach is robust to rather large perturbations of $\beta$, in both negative and positive directions. 
Up to $\delta=500\%$, the bias of the reconstruction is negligible; only if $\delta=1000\%$ or more deviates the reconstruction significantly from the true one, although it still performs better than the na\"ive approach of trusting the identified network structure.

Figure~\ref{wrongB} shows the reconstruction of a random network with $100$ nodes and probability of a connection of $0.2$ for the true value of $\beta=0.03$, and also for various values of $\beta$ deviating up to $\delta=400\%$ from the true value of $\beta$.
The cut-off for the truncated singular value decomposition method is $0.55$ and the probability of \textit{type I error} is $\alpha=0.05$. 
Also in this case, the method is robust to large perturbations of $\beta$, in both negative and positive directions.
A deviation of more than $400\%$ is needed for the method to fail and not to have an improvement over the na\"ive approach.

Figure~\ref{wrongA} shows the reconstruction of a random network with $100$ nodes and probability of a connection of $0.2$ for the true value of $\alpha=0.05$, and $\alpha$ deviating up to $-85\%$.
The cut-off for the truncated singular value decomposition method is $0.6$ and the probability of \textit{type II error} is $\beta=0.03$. 
Figure~\ref{wrongA} shows that the method is affected by relatively large perturbations of $\alpha$.
Namely, for $\delta<-85\%$ and $\delta>50\%$, the reconstructions deviate significantly from the true one.
The reason is that sparse networks are susceptible to perturbation of \textit{type I error}.
Figure~\ref{wrongA2} shows the reconstruction of a denser network, i.e., a random network with probability of a connection of $0.8$, for the same true values of $\alpha$ and $\beta$.
In this case, a deviation of $150\%$ or more is needed for the method to fail.
Comparing Figs.~\ref{wrongA} and \ref{wrongA2}, we can conclude that dense networks are more robust to perturbations of \textit{type I error} than sparse networks.
This is intuitively motivated by the fact that the \textit{type I error} affects links that are not present in the network, and therefore it has a bigger influence on a sparse network.

The above results demonstrated that a rough estimate for $\alpha$ and $\beta$ is sufficient to get an accurate reconstruction; the method is robust to relatively large perturbations of these two errors. 
Rough estimates of these parameters are typically available from simulation studies or prior knowledge about the system. 
Note again that the role of $\alpha$ and $\beta$ are different; $\alpha$ is often controlled and can be obtained from known statistics of the techniques under the null hypothesis; $\beta$ is more difficult as the true alternative would need to be known. 
Given the above simulations, our algorithm is more robust with respect to $\beta$ than $\alpha$, which aligns with the different role of these two errors.
As mentioned at the beginning of Section~\ref{results}, we vary $\alpha$ and $\beta$ in the range $1\%-10\%$ mimicking a typical analysis method that has high sensitivity and high specificity. 
Choosing either lower or higher values for the true $\alpha$ and $\beta$ does not affect the general qualitatively result of the analysis, but it changes the range of perturbation that leads to the failure of the reconstruction method.

\begin{figure}[!t]
\includegraphics[width=0.5\textwidth,trim=0.9cm 0cm 1cm 0cm,clip=true]{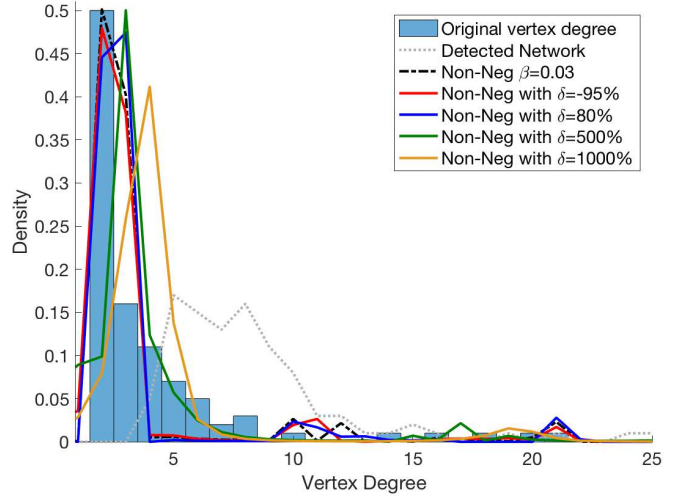}
\caption{Density histogram of the original vertex degrees, blue bars, detected density vertex degree distribution, gray dotted line, and the results when a non-negative constraint is applied to the truncated singular value decomposition using the true $\beta$, black dashed line, and perturbations from the true $\beta$, red, blue, yellow, and green solid lines. The original network is a Scale-Free network with $100$ nodes.}
\label{wrongB2}
\end{figure}

\begin{figure}[!ht]
\includegraphics[width=0.5\textwidth,trim=0.9cm 0cm 1cm 0cm,clip=true]{Erdos_p02_a05b03_k055_wrongB_extreme.eps}
\caption{Density histogram of the original vertex degrees, blue bars, detected density vertex degree distribution, gray dotted line, and the results when a non-negative constraint is applied to the truncated singular value decomposition using the true $\beta$, black dashed line, and perturbations from the true $\beta$, red, blue, yellow, and green solid lines. The original network is a random network with $100$ nodes and probability of connection $0.2$.}
\label{wrongB}
\end{figure}

\begin{figure}[!ht]
\includegraphics[width=0.5\textwidth,trim=0.9cm 0cm 1cm 0cm,clip=true]{Erdos_p02_a05b03_k06_wrongA.eps}
\caption{Density histogram of the original vertex degrees, blue bars, detected density vertex degree distribution, gray dotted line, and the results when a non-negative constraint is applied to the truncated singular value decomposition using the true $\alpha$, black dashed line, and perturbations from the true $\alpha$, red, blue, yellow, and green solid lines. The original network is a random network with $100$ nodes and probability of connection $0.2$.}
\label{wrongA}
\end{figure}

\begin{figure}[!ht]
\includegraphics[width=0.5\textwidth,trim=0.9cm 0cm 1cm 0cm,clip=true]{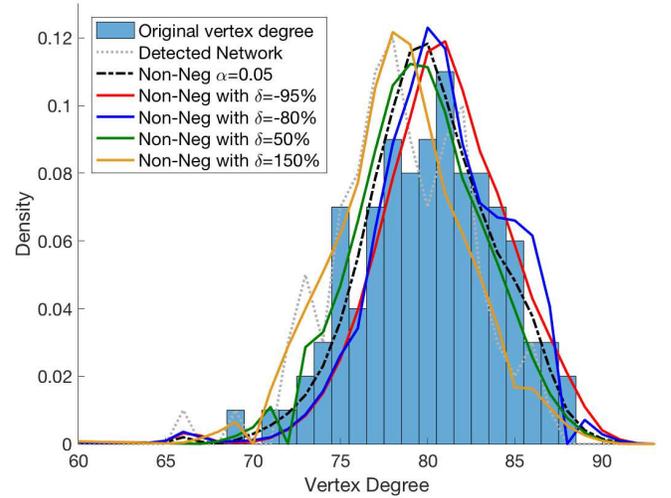}
\caption{Density histogram of the original vertex degrees, blue bars, detected density vertex degree distribution, gray dotted line, and the results when a non-negative constraint is applied to the truncated singular value decomposition using the true $\alpha$, black dashed line, and perturbations from the true $\alpha$, red, blue, yellow, and green solid lines. The original network is a random network with $100$ nodes and probability of connection $0.8$.}
\label{wrongA2}
\end{figure}

\section{Conclusions}

We explore the impact of false positive and false negative conclusions about the presence or absence of links on the vertex degree distribution of a network.
Using an analytical approach, we investigate this dependence on the dimension of the network and the probabilities of \textit{type I} and \textit{type II errors}.
Equation (\ref{p1Ap}) describes the density of the vertex degree distribution of the biased network and thus allows to calculate the influence of false positive and false negative conclusions about links on any kind of network, assuming the probabilities of \textit{type I} and \textit{type II errors} are known.

In the inverse problem, the aim is to reconstruct the original network.
Equation~(\ref{pA+p1}) enables us to calculate analytically the vertex degree distribution of the original network if the biased one and the probabilities of \textit{type I} and \textit{type II errors} are given.
When the dimension of the network is relatively large, numerical issues arise and consequently the truncated singular value decomposition is used to calculate the original network vertex degree distribution.
Numerical simulations show that the vertex degree distribution is correctly recovered in all the cases discussed; the cases presented are designed to cover a variety of network topologies and therefore degree distributions.

The outcomes of this manuscript are general results that enable to reconstruct analytically the vertex degree distribution of any network.
The analytic formula [Eq.~(\ref{pA+p1})] that allows to find the original vertex degree distribution depends only on the detected vertex degree distribution and on the probabilities of \textit{type I} and \textit{type II errors}.
This method is a powerful tool since the vertex degree distribution is a key characteristic of networks.
Moreover, we have actually shown that this method can be used to reconstruct individual node degrees to a very high accuracy. 
This should positively impact on various measures that can be derived from the networks. 
Our proposed method should outperform standard approaches in terms of betweenness centrality, identification of hubs, and other network characteristics. 
This should be rigorously assessed in future research.

A limitation of this work is the assumption that the probabilities of \textit{type I} and \textit{type II errors} are known a priori.
Nevertheless, we show that the method is robust to relatively large perturbations of these two errors.
Therefore, wrong estimates of \textit{type I} and \textit{type II errors}, within certain bounds, do not cause the reconstruction of be rendered invalid. 
We like to emphasise again that in application the \textit{type I error} is typically controlled, while an estimate for the \textit{type II error} can only be obtained through prior experiments/knowledge or simulation studies. 
As shown in various simulations, our reconstruction method is robust to considerable deviations in $\beta$, which supports the usefulness of our technique over and above providing deeper insights into the role of these errors in network reconstruction; our approach is promising for real-world applications. 
Note though that it is always advisable to utilise simulation studies to characterise the advantageous and limitations in a concrete application at hand.
Further analyses should study possible statistical approaches to infer these parameters employing Bayesian approaches or simulation studies. 
We recommend performing the latter to get an estimate of the \textit{type II error} in particular.

Future studies should investigate the influence of \textit{type I} and \textit{type II errors} on other network characteristics, e.g. the number of edges, the global clustering coefficient, and the efficiency.
As a consequence, more information about the original network can be found and, therefore, combining them all a better reconstruction of the network can be achieved.

\begin{acknowledgements}

The authors thank Dr. Daniel Vogel for helpful comments and discussions.
This project has received funding from the European Union's Horizon 2020 research and innovation programme under the Marie Sklodowska-Curie grant agreement No 642563.
The authors declare no competing financial interests.
\end{acknowledgements}

\vspace*{1cm}
\appendix

\section{Determinant of the matrix A}
\label{apProof}

In this appendix we prove that the matrix $A$ has determinant 
$$\det A=(1-\alpha-\beta)^{\frac{n(n-1)}{2}}.$$
To achieve this we have to prove some intermediate steps.

First, we write Eq.~(\ref{bigEq}) in a more compact form as

\begin{widetext}
\begin{equation}
\mathbb{P}(d'=k'|d=k)={\displaystyle \sum_{i=\max\{0,k-k'\}}^{\min\{k,n-1-k'\}}\binom{k}{i}(1-\beta)^{k-i}\beta^i\binom{n-1-k}{k'-k+i} \alpha^{k'-k+i}(1-\alpha)^{n-1-k'-i}}.
\label{compactEq}
\end{equation}
\end{widetext}

Since the element $A_{uv}$ is defined as the probability $\mathbb{P}(d'=u+1|d=v+1)$, we can write 
\begin{widetext}
\begin{equation}
A_{uv}={\displaystyle \sum_{i=\max\{0,v-u\}}^{\min\{v-1,n-u\}}\binom{v-1}{i}(1-\beta)^{v-1-i}\beta^i\binom{n-v}{u-v+i} \alpha^{u-v+i}(1-\alpha)^{n-u-i}},
\label{defA}
\end{equation}
\end{widetext}
for $u,v\in\{1,\cdots,n\}$ and real numbers $0\leq\alpha,\beta\leq 1$.

\begin{prop}[Limit cases]
\label{propLimit}
Let $A=A(n,\alpha,\beta)$ be the matrix defined by Eq.~(\ref{defA}).
For $\beta=1-\alpha$ or $\alpha,\beta=0,1$, the determinant of $A$ satisfies Eq.~\ref{detA}.
\end{prop}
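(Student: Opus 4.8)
The plan is to verify the three limit cases one by one, in each case computing the matrix $A$ from Eq.~(\ref{defA}) explicitly and then reading off its determinant; throughout I use the conventions $0^0=1$ and $\binom{m}{j}=0$ for $j<0$ or $j>m$, which is what makes the compact summation in Eq.~(\ref{defA}) behave correctly.

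First I would dispose of the \emph{no-error} case $\alpha=\beta=0$. Substituting into Eq.~(\ref{defA}), the factor $\beta^i$ kills every term with $i>0$ and the factor $\alpha^{u-v+i}$ kills every term with $u-v+i>0$; since $i\geq\max\{0,v-u\}$, the two requirements $i=0$ and $u-v+i=0$ force $u=v$, and then the surviving term is $\binom{v-1}{0}(1-0)^{v-1}\binom{n-v}{0}(1-0)^{n-v}=1$. Hence $A$ is the identity matrix and $\det A=1=(1-0-0)^{n(n-1)/2}$. Next, the \emph{full-error} case $\alpha=\beta=1$: now $(1-\beta)^{v-1-i}$ forces $i=v-1$ while $(1-\alpha)^{n-u-i}$ forces $i=n-u$, so a nonzero entry can occur only when $u+v=n+1$, and for such $(u,v)$ the single surviving term is $\binom{v-1}{v-1}\,1^{v-1}\binom{n-v}{u-1}\,1^{u-1}=1$ because $u-1=n-v$. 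Thus $A$ is the anti-diagonal reversal matrix, whose determinant is the sign of the permutation $k\mapsto n+1-k$; this permutation has $\binom{n}{2}$ inversions, so $\det A=(-1)^{n(n-1)/2}=(1-1-1)^{n(n-1)/2}$. The two remaining corners $(\alpha,\beta)=(0,1)$ and $(1,0)$ satisfy $\beta=1-\alpha$ and are handled by the next case.

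The substantive case is $\beta=1-\alpha$, where $1-\alpha-\beta=0$ so the claim is that $A$ is singular. Substituting $1-\beta=\alpha$ and $1-\alpha=\beta$ into Eq.~(\ref{defA}), a term of the summand carries $\alpha^{v-1-i}(1-\alpha)^{i}\alpha^{u-v+i}(1-\alpha)^{n-u-i}=\alpha^{u-1}(1-\alpha)^{n-u}$, which is independent of $i$ and pulls out of the sum. What remains is $\sum_i\binom{v-1}{i}\binom{n-v}{u-v+i}$, and rewriting $\binom{n-v}{u-v+i}=\binom{n-v}{n-u-i}$ identifies this as a Vandermonde convolution equal to $\binom{(v-1)+(n-v)}{n-u}=\binom{n-1}{u-1}$. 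Therefore $A_{uv}=\binom{n-1}{u-1}\alpha^{u-1}(1-\alpha)^{n-u}$ does not depend on the column index $v$, so all $n$ columns of $A$ are identical and, for $n\geq 2$ (so that the exponent $n(n-1)/2\geq 1$), $\det A=0=(1-\alpha-\beta)^{n(n-1)/2}$.

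The computations here are routine; the only points needing care are bookkeeping ones—checking that the out-of-range binomial coefficients vanish so that the summation limits in Eq.~(\ref{defA}) may be freely extended when applying the Vandermonde identity, and verifying the index constraints that pin down the nonzero entries in the two corner cases. If anything counts as the main obstacle, it is organising the $\alpha=\beta=1$ case so that the sign $(-1)^{n(n-1)/2}$ emerges cleanly as the parity of the reversal permutation rather than via an ad hoc cofactor expansion, and noting that this already matches $(1-\alpha-\beta)^{n(n-1)/2}$ exactly.
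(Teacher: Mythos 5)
Your computations for the three cases you treat are correct, and in the $\beta=1-\alpha$ case your derivation is actually more complete than the paper's: the paper simply asserts that the entry collapses to $\binom{n-1}{u-1}\alpha^{u-1}(1-\alpha)^{n-u}$, whereas you exhibit the cancellation of the $i$-dependence in the exponents and identify the remaining sum as a Vandermonde convolution. The corner cases $\alpha=\beta=0$ and $\alpha=\beta=1$ are also handled cleanly.

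There is, however, a coverage gap. You read ``$\alpha,\beta=0,1$'' as the four corners of the parameter square and dispose of $(0,1)$ and $(1,0)$ via the line $\beta=1-\alpha$. But the proposition has to cover everything excluded by the hypothesis of the main theorem's induction, which assumes $0<\alpha,\beta<1$ and $\beta\neq 1-\alpha$; the complement of that set is the line $\beta=1-\alpha$ together with the entire \emph{boundary} of the square, i.e.\ the cases where exactly one of $\alpha,\beta$ lies in $\{0,1\}$ while the other is strictly between $0$ and $1$ (and not equal to $1$ minus it). Your argument says nothing about, e.g., $\alpha=0$ with $0<\beta<1$, or $\beta=1$ with $0<\alpha<1$. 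The paper treats these four edge cases explicitly: for $\alpha=0$ the matrix is upper triangular with diagonal entries $(1-\beta)^{u-1}$, giving $\det A=(1-\beta)^{n(n-1)/2}$; for $\beta=0$ it is lower triangular with diagonal $(1-\alpha)^{n-u}$; for $\alpha=1$ (resp.\ $\beta=1$) it vanishes above (resp.\ below) the anti-diagonal, giving $(-\beta)^{n(n-1)/2}$ (resp.\ $(-\alpha)^{n(n-1)/2}$), each matching $(1-\alpha-\beta)^{n(n-1)/2}$. These are routine triangular-determinant computations in the same spirit as your corner cases, but they must be included for the proposition to do the job it is used for in the proof of the theorem.
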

\begin{proof}
When $\beta=1-\alpha$, the Eq.~(\ref{defA}) becomes
$$A_{uv}(n,\alpha,1-\alpha)=\binom{n-1}{u-1}\alpha^{u-1}(1-\alpha)^{n-u}.$$
Note that $A_{uv}(n,\alpha,1-\alpha)$ does not depend on $v$ but only on $u$, therefore in each line all the elements are identical, i.e., it is a multiple of vector $[1,\cdots,1]$; hence, all the lines are linear depend and then the determinant of $A$ is $\det A(n,\alpha,1-\alpha)=0$.

If $\alpha,\beta=0$ then the matrix $A$ is the identity and therefore the determinant is $\det A(n,0,0)=1$.
While if $\alpha,\beta=1$ then the matrix $A$ is anti-diagonal with all elements equal to one, then the determinant is $\det A(n,1,1)=(-1)^{\frac{n(n-1)}{2}}$.

If $\alpha =0,\beta\neq 0$, Eq.~(\ref{defA}) can be formulated as
\begin{equation}
A_{uv}(n,0,\beta)=
\begin{cases}
\binom{v-1}{u-1}(1-\beta)^{u-1}\beta^{v-u} & u\leq v \\
0 & u>v
\end{cases}
\label{alpha0}
\end{equation}
Note that $\beta\neq 1$ since the case $\beta = 1-\alpha$ has been already considered.
The matrix $A(n,0,\beta)$ is upper triangular and therefore the determinant is the product of the elements on the diagonal $A_{uu}(n,0,\beta)=(1-\beta)^{u-1}$, i.e.,
$$\det A(n,0,\beta)=(1-\beta)^{\frac{n(n-1)}{2}}.$$

If $\alpha =1,\beta\neq 0,1$, Eq.~(\ref{defA}) can be formulated as
\begin{equation}
A_{uv}(n,1,\beta)=
\begin{cases}
\binom{v-1}{n-u}(1-\beta)^{v-1-n+u}\beta^{n-u} & u\geq n-v+1\\
0 & u< n-v+1
\end{cases}
\label{alpha1}
\end{equation}
The matrix $A(n,1,\beta)$ has all zeros above the anti-diagonal and therefore the determinant is the product of the elements on the anti-diagonal $A_{uu}(n,1,\beta)=\beta^{u-1}$ and sign given by $(-1)^{\frac{n(n-1)}{2}}$, i.e.,
$$\det A(n,1,\beta)=(-\beta)^{\frac{n(n-1)}{2}}.$$

If $\beta =0,\alpha\neq 0,1$, Eq.~(\ref{defA}) can be formulated as
\begin{equation}
A_{uv}(n,\alpha,0)=
\begin{cases}
\binom{n-v}{u-v}\alpha^{u-v}(1-\alpha)^{n-u} & u\geq v \\
0 & u<v
\end{cases}
\label{beta0}
\end{equation}
The matrix $A(n,\alpha,0)$ is lower triangular and therefore the determinant is the product of the elements on the diagonal $A_{uu}(n,\alpha,0)=(1-\alpha)^{n-u}$, i.e.,
$$\det A(n,\alpha,0)=(1-\alpha)^{\frac{n(n-1)}{2}}.$$

If $\beta =1,\alpha\neq 0,1$, Eq.~(\ref{defA}) can be formulated as
\begin{equation}
A_{uv}(n,\alpha,1)=
\begin{cases}
\binom{n-v}{u}\alpha^{u}(1-\alpha)^{n-u-v+1} & u\leq n-v+1 \\
0 & u>n-v+1
\end{cases}
\label{beta1}
\end{equation}
The matrix $A(n,\alpha,1)$ has all zeros below the anti-diagonal and therefore the determinant is the product of the elements on the anti-diagonal $A_{uu}(n,\alpha,0)=\alpha^{n-u}$ and sign given by $(-1)^{\frac{n(n-1)}{2}}$, i.e.,
$$\det A(n,\alpha,1)=(-\alpha)^{\frac{n(n-1)}{2}}.$$
\end{proof}

Future calculations result easier if the transpose $A^T$ of matrix $A$ is considered.
Considering $A^T$ instead of $A$ does not affect the calculation of the determinant since it is in general true that $\det A^T=\det A$.

\begin{prop}[Transformations]
\label{theoinduction}
Given the matrix A, defined by Eq.~(\ref{defA}), let's call $A^{T_n}$ the transpose of $A$ of dimension $n$.
Let be $0<\alpha,\beta<1$ and $\beta\neq 1-\alpha$.
We call $\overline{A^{T_n}}$ the matrix with elements
\begin{equation}
\overline{a^n_{ij}}=
\begin{cases}
{\displaystyle a^n_{ij}\frac{(1-\alpha-\beta)^{n-1}}{(1-\alpha)^{n-1}}} &\qquad i=1\\
{\displaystyle \left( a^n_{ij}-\frac{a^n_{i1}a^n_{1j}}{a^n_{11}}\right)\frac{1-\alpha}{1-\alpha-\beta}} &\qquad i=2\\
{\displaystyle \left( a^n_{ij}-\frac{\beta}{1-\alpha}a^n_{i-1,j}\right)\frac{1-\alpha}{1-\alpha-\beta}} &\qquad i=3,\cdots,n\\
\end{cases}
\label{transf}
\end{equation}
where $a^n_{ij}$ are the elements of the matrix $A^{T_n}$.
Then, we prove that
\begin{equation}
\overline{A^{T_n}}=
\left[
\begin{array}{c|c}
 (1-\alpha +\beta)^{n-1} &  \\
 \hline
 &\\
 0 & A^{T_{n-1}}  \\
 &\\
\end{array}
\right].
\label{induction}
\end{equation}
\end{prop}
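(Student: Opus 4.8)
The plan is to verify Eq.~(\ref{induction}) entry by entry, using the probabilistic meaning of $A$. Comparing Eq.~(\ref{defA}) with Eq.~(\ref{compactEq}), the entry $a^n_{ij}=A^{T_n}_{ij}=A_{ji}$ is the probability that a vertex of true degree $i-1$ in an $n$-vertex network is observed with degree $j-1$; equivalently $a^n_{ij}=\mathbb{P}(X_i+Y_i=j-1)$, where $X_i\sim\mathcal{B}(i-1,1-\beta)$ counts the retained true positive links and $Y_i\sim\mathcal{B}(n-i,\alpha)$ the false positive links, independently. Since $0<\alpha,\beta<1$ and $\beta\neq 1-\alpha$ we have $1-\alpha\neq 0$ and $1-\alpha-\beta\neq 0$, so every division in Eq.~(\ref{transf}) is legitimate; this is used below without further mention.

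Two of the three claims are immediate. For $i=1$ the binomial $X_1$ is degenerate at $0$, hence $a^n_{1j}=\binom{n-1}{j-1}\alpha^{j-1}(1-\alpha)^{n-j}$ and in particular $a^n_{11}=(1-\alpha)^{n-1}$; the $i=1$ line of Eq.~(\ref{transf}) then gives $\overline{a^n_{11}}=(1-\alpha-\beta)^{n-1}$, the asserted top-left entry. For the first column ($j=1$) one has $a^n_{i1}=\mathbb{P}(X_i=0)\,\mathbb{P}(Y_i=0)=\beta^{i-1}(1-\alpha)^{n-i}$, whence $a^n_{i1}/a^n_{i-1,1}=\beta/(1-\alpha)$ for every $i\ge 2$; plugging this into the $i=2$ and the $i=3,\dots,n$ lines of Eq.~(\ref{transf}) makes the bracket vanish, so $\overline{a^n_{i1}}=0$ for $i=2,\dots,n$. (The first-row entries $\overline{a^n_{1j}}$ with $j\ge 2$ are left unspecified in Eq.~(\ref{induction}) and need not be computed.)

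It remains to prove $\overline{a^n_{ij}}=a^{n-1}_{i-1,j-1}$ for $i,j\in\{2,\dots,n\}$; this is the heart of the argument. Because $a^n_{21}/a^n_{11}=\beta/(1-\alpha)$, both the $i=2$ and the $i\ge 3$ lines of Eq.~(\ref{transf}) take the common form $\overline{a^n_{ij}}=\frac{1-\alpha}{1-\alpha-\beta}\bigl(a^n_{ij}-\frac{\beta}{1-\alpha}\,a^n_{i-1,j}\bigr)$. Now peel off one potential link: write $X_i$ as a sum $X'+Z$ of independent binomials $X'\sim\mathcal{B}(i-2,1-\beta)$, $Z\sim\mathcal{B}(1,1-\beta)$, and observe that the $(i-1)$-st row instead pairs an $X'$-distributed variable with a sum $Y'+W$, $Y'\sim\mathcal{B}(n-i,\alpha)$, $W\sim\mathcal{B}(1,\alpha)$, while $Y_i$ has the law of $Y'$. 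Setting $S=X'+Y'$, so that $\mathbb{P}(S=m)=a^{n-1}_{i-1,m+1}$, and conditioning on $Z$ and on $W$ gives $a^n_{ij}=(1-\beta)\mathbb{P}(S=j-2)+\beta\,\mathbb{P}(S=j-1)$ and $a^n_{i-1,j}=\alpha\,\mathbb{P}(S=j-2)+(1-\alpha)\mathbb{P}(S=j-1)$. In the combination above the $\mathbb{P}(S=j-1)$ terms cancel and the coefficient of $\mathbb{P}(S=j-2)$ collapses to $(1-\beta)-\frac{\alpha\beta}{1-\alpha}=\frac{1-\alpha-\beta}{1-\alpha}$, so $\overline{a^n_{ij}}=\mathbb{P}(S=j-2)=a^{n-1}_{i-1,j-1}$. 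Collecting the three computations, the entries of $\overline{A^{T_n}}$ in rows and columns $2,\dots,n$ reproduce $A^{T_{n-1}}$ and its first column is $\bigl((1-\alpha-\beta)^{n-1},0,\dots,0\bigr)^{T}$, which is Eq.~(\ref{induction}).

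I expect the only genuine obstacle to be the index bookkeeping in the last step: one must track that going from row $i$ to row $i-1$ removes one potential true positive and adds one potential false positive, that going from $A^{T_n}$ to $A^{T_{n-1}}$ removes one potential false positive, and that the boundary columns $j=2$ and $j=n$ are handled correctly --- all of which the convolution form of $\mathbb{P}(S=\cdot)$ takes care of automatically once the index ranges are checked to match. A purely algebraic route, substituting the explicit sum of Eq.~(\ref{compactEq}) into Eq.~(\ref{transf}) and simplifying with Pascal's rule and the Vandermonde convolution, also works but is appreciably messier. Finally, writing $\overline{A^{T_n}}=P\,A^{T_n}$ with $P$ the lower-triangular matrix encoding the operations of Eq.~(\ref{transf}), one has $\det P=\frac{(1-\alpha-\beta)^{n-1}}{(1-\alpha)^{n-1}}\bigl(\frac{1-\alpha}{1-\alpha-\beta}\bigr)^{n-1}=1$; hence, once Eq.~(\ref{induction}) is established, expanding along the first column yields $\det A(n,\alpha,\beta)=(1-\alpha-\beta)^{n-1}\det A(n-1,\alpha,\beta)$, which together with the trivial case $n=1$ and the limit cases of Proposition~\ref{propLimit} proves Eq.~(\ref{detA}).
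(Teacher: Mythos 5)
Your proof is correct, and it takes a genuinely different route from the paper. The paper's own proof is a case-by-case algebraic verification: it substitutes the explicit sum of Eq.~(\ref{compactEq}) into Eq.~(\ref{transf}) and checks the identity $\overline{a^n_{i+1,j+1}}=a^{n-1}_{ij}$ separately on six index-range regimes ($i=j$; $j>i$ with $j\geq n-i-1$ or $j<n-i-1$; $j<i$ with $j>n-i$, $j<n-i$, or $j=n-i$), and in the published text these computations are only asserted, not displayed. Your argument replaces all of this with a single uniform identity: writing $a^n_{ij}=\mathbb{P}(X_i+Y_i=j-1)$ with $X_i\sim\mathcal{B}(i-1,1-\beta)$ and $Y_i\sim\mathcal{B}(n-i,\alpha)$ independent, peeling off one Bernoulli factor from each of rows $i$ and $i-1$, and observing that the combination $a^n_{ij}-\frac{\beta}{1-\alpha}a^n_{i-1,j}$ cancels the $\mathbb{P}(S=j-1)$ contribution and leaves $\frac{1-\alpha-\beta}{1-\alpha}\,\mathbb{P}(S=j-2)=\frac{1-\alpha-\beta}{1-\alpha}\,a^{n-1}_{i-1,j-1}$. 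The convolution form absorbs the paper's boundary cases automatically (out-of-range probabilities vanish), the reduction of the $i=2$ row to the common form via $a^n_{21}/a^n_{11}=\beta/(1-\alpha)$ is handled cleanly, and your closing remark that the transformation matrix $P$ has unit determinant correctly supplies the step the Theorem needs. One point worth flagging: your computation yields $(1-\alpha-\beta)^{n-1}$ for the top-left entry of $\overline{A^{T_n}}$, whereas Eq.~(\ref{induction}) as printed reads $(1-\alpha+\beta)^{n-1}$; your value is the right one --- the printed sign is a typo, as the determinant recursion $\det A^{T_n}=(1-\alpha-\beta)^{n-1}\det A^{T_{n-1}}$ used in the Theorem requires $-\beta$.
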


\begin{proof}
To verify Eq.~(\ref{induction}), we have proved that the identity $\overline{a^n_{i+1,j+1}}=a^{n-1}_{ij}$ i.e.,
$$ \left( a^n_{2j}-\frac{a^n_{21}a^n_{1j}}{a^n_{11}}\right)\frac{1-\alpha}{1-\alpha-\beta}= a^{n-1}_{1j}$$
and
\begin{multline*}
\left( a^n_{i+1,j+1}-\frac{\beta}{1-\alpha}a^n_{i,j+1}\right)\frac{1-\alpha}{1-\alpha-\beta}=a^{n-1}_{ij} \\ \text{for}\ i=2,\cdots n-1
\end{multline*}
are always valid.
To achieve this, we have split the calculations into cases, i.e.,
\begin{equation*}
i=j,\quad
\begin{cases}
j>i\\
j\geq n-i-1
\end{cases},
\begin{cases}
j>i\\
j<n-i-1
\end{cases},
\end{equation*}
\begin{equation*}
\begin{cases}
j<i\\
j>n-i
\end{cases},
\begin{cases}
j<i\\
j<n-i
\end{cases},
\begin{cases}
j<i\\
j=n-i
\end{cases}.
\end{equation*}
For each condition the identity $\overline{a^n_{i+1,j+1}}=a^{n-1}_{ij}$ has been proved.
\end{proof}

\begin{theorem}
\label{theoDet}
The $n\times n$ matrix A, defined by Eq.~(\ref{defA}), has determinant 
\begin{equation*}
\det A=(1-\alpha-\beta)^{\frac{n(n-1)}{2}}.
\end{equation*}
\end{theorem}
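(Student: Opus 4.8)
The plan is to split on the parameter values and run an induction on $n$ powered by Proposition~\ref{theoinduction}. The degenerate and boundary cases---$\beta=1-\alpha$ and $\alpha,\beta\in\{0,1\}$---are already dispatched by Proposition~\ref{propLimit}, so I would handle them in one sentence and assume from then on that $0<\alpha,\beta<1$ and $\alpha+\beta\neq 1$. In this regime $1-\alpha-\beta\neq 0$, so all the quantities appearing in Eq.~(\ref{transf}) are well defined, and from Eq.~(\ref{defA}) one has $a^n_{11}=(1-\alpha)^{n-1}\neq 0$, so the division by $a^n_{11}$ in the $i=2$ row of Eq.~(\ref{transf}) is legitimate.

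The crux is to see how the passage $A^{T_n}\mapsto\overline{A^{T_n}}$ from Proposition~\ref{theoinduction} acts on the determinant. Reading off Eq.~(\ref{transf}), $\overline{A^{T_n}}$ is produced from $A^{T_n}$ by two stages: (i) a string of elementary ``add a multiple of one row to another'' operations---subtract $\tfrac{a^n_{21}}{a^n_{11}}$ times row $1$ from row $2$, and subtract $\tfrac{\beta}{1-\alpha}$ times row $i-1$ from row $i$ for $i=3,\dots,n$---followed by (ii) rescaling row $1$ by $\tfrac{(1-\alpha-\beta)^{n-1}}{(1-\alpha)^{n-1}}$ and each of rows $2,\dots,n$ by $\tfrac{1-\alpha}{1-\alpha-\beta}$. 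Performed from the bottom row upward, each operation in (i) uses a row not yet modified, hence is a genuine determinant-preserving elementary operation; and the $n$ rescaling factors in (ii) multiply to $\tfrac{(1-\alpha-\beta)^{n-1}}{(1-\alpha)^{n-1}}\cdot\bigl(\tfrac{1-\alpha}{1-\alpha-\beta}\bigr)^{n-1}=1$. Therefore $\det\overline{A^{T_n}}=\det A^{T_n}=\det A$, using $\det A^{T}=\det A$.

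Next I would invoke the block form of Eq.~(\ref{induction}): the first column of $\overline{A^{T_n}}$ is $(1-\alpha-\beta)^{n-1}$ in the $(1,1)$ slot (this is $a^n_{11}\cdot\tfrac{(1-\alpha-\beta)^{n-1}}{(1-\alpha)^{n-1}}$) with zeros below it, and the lower-right $(n-1)\times(n-1)$ block is $A^{T_{n-1}}$. Expanding along that first column gives $\det\overline{A^{T_n}}=(1-\alpha-\beta)^{n-1}\det A^{T_{n-1}}$, so combining with the previous step yields the recursion $\det A_n=(1-\alpha-\beta)^{\,n-1}\det A_{n-1}$, where $A_m$ is the $m\times m$ instance of Eq.~(\ref{defA}). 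The base case $n=1$ is immediate, since Eq.~(\ref{defA}) gives $A_1=[\,1\,]$ and $\det A_1=1=(1-\alpha-\beta)^{0}$. Telescoping, $\det A_n=\prod_{m=2}^{n}(1-\alpha-\beta)^{m-1}=(1-\alpha-\beta)^{\sum_{m=1}^{n-1}m}=(1-\alpha-\beta)^{n(n-1)/2}$, which is the claim; together with Proposition~\ref{propLimit} it covers all $0\le\alpha,\beta\le 1$.

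With all the combinatorial identities absorbed into Proposition~\ref{theoinduction}, I expect the only real care to be the bookkeeping of the second paragraph: carrying out the row-combination steps in the bottom-up order that keeps them elementary, and checking that the rescaling factors collapse to exactly $1$ so that $\det\overline{A^{T_n}}$ equals $\det A$ and not merely a constant multiple of it. If one wished to sidestep ordering altogether, one could equivalently note that the matrix realizing $A^{T_n}\mapsto\overline{A^{T_n}}$ factors as a unipotent matrix (determinant $1$) times a diagonal matrix whose diagonal entries have product $1$.
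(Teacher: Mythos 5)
Your proposal is correct and follows essentially the same route as the paper's proof: dispatch the boundary cases via Proposition~\ref{propLimit}, then induct on $n$ using the reduction $\det A^{T_n}=(1-\alpha-\beta)^{n-1}\det A^{T_{n-1}}$ supplied by Proposition~\ref{theoinduction}. The only differences are cosmetic improvements on your part --- you start the induction at $n=1$ rather than $n=2$, and you explicitly justify why the transformation in Eq.~(\ref{transf}) preserves the determinant (bottom-up elementary row operations plus rescalings whose factors multiply to $1$), a step the paper asserts without proof.
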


\begin{proof}
For $\alpha,\beta=0,1$ or $\beta=1-\alpha$, Proposition~\ref{propLimit} proves the theorem.
Assume $0<\alpha,\beta<1$ and $\beta\neq 1-\alpha$.
Since a matrix and its transposed have the same determinant, we proceed considering the matrix $A^{T_n}$ proving a proof by induction.

The base of induction is $n=2$; in this case the matrix is
\begin{equation*}
A^{T_2}=
\left[
\begin{array}{ccc}
1-\alpha  & \alpha  \\
\beta  & 1-\beta\\
\end{array}
\right]
\end{equation*}
and it has determinant $\det A^{T_2}=1-\alpha-\beta$.

The inductive step consists in assuming that $\det A^{T_{n-1}}=(1-\alpha-\beta)^{\frac{(n-1)(n-2)}{2}}$, i.e., the inductive hypothesis for dimension $n-1$, and proving the statement for dimension $n$.

Since $0<\alpha,\beta<1$ and $\beta\neq 1-\alpha$, we can apply Proposition~\ref{theoinduction}.
The transformations defined by Eq.~(\ref{transf}) guarantees that the matrix $\overline{A^{T_n}}$ has the same determinant as $A^{T_n}$, and according to Eq.~(\ref{induction}), we can express the determinant as $\det A^{T_{n}}=(1-\alpha-\beta)^{n-1}\det A^{T_{n-1}}$.
We can now apply the inductive hypothesis, therefore

\begin{align*}
\det A =& \det A^{T_n} \\
=& \det \overline{A^{T_n}}\\
=& (1-\alpha-\beta)^{n-1}\det A^{T_{n-1}}\\
=& (1-\alpha-\beta)^{n-1} (1-\alpha-\beta)^{\frac{(n-1)(n-2)}{2}}\\
=& (1-\alpha-\beta)^{\frac{n(n-1)}{2}}
\end{align*}
that concludes the proof of the theorem.
\end{proof}

\bibliographystyle{apsrev4-1}
\bibliography{prE.bib}

\end{document}